\newtheorem*{theorem*}{Theorem}
\newtheorem{lemma}{Lemma}[section]
\newtheorem{theorem}[lemma]{Theorem}
\newtheorem{claim}{Claim}
\newtheorem{problem}{Problem}
\crefname{claim}{Claim}{Claims}
\Crefname{claim}{Claim}{Claims}
\renewcommand{\P}[1]{{\mathbb{P}}\left[#1\right]}
\renewcommand{\Pr}{\mathbb{P}}
\newcommand{\DED}{{\sc DED($k$)}}
\newcommand{\DVD}{{\sc DVD($k$)}}
\newcommand{\R}{\mathbb{R}}
\newcommand{\E}{\mathbb{E}}
\begin{document}
\title{A Randomized Rounding Approach for DAG Edge Deletion}
 \date{}

\author{Sina Kalantarzadeh} 
\affil{\small University of Waterloo\\ \textsf{S4kalant@uwaterloo.ca}}

\author{Nathan Klein} 
\affil{\small Boston University\\ \textsf{nklei1@bu.edu}}

\author{Victor Reis} 
\affil{\small Microsoft Research\\ \textsf{victorol@microsoft.com}}
 
 \maketitle

 \begin{abstract}
    In the DAG Edge Deletion problem, we are given an edge-weighted directed acyclic graph and a parameter $k$, and the goal is to delete the minimum weight set of edges so that the resulting graph has no paths of length $k$. This problem, which has applications to scheduling, was introduced in 2015 by Kenkre, Pandit, Purohit, and Saket. They gave a $k$-approximation and showed that it is UGC-Hard to approximate better than $\lfloor 0.5k \rfloor$ for any constant $k \ge 4$ using a work of Svensson from 2012. The approximation ratio was improved to $\frac{2}{3}(k+1)$ by Klein and Wexler in 2016. 

    In this work, we introduce a randomized rounding framework based on distributions over vertex labels in $[0,1]$. The most natural distribution is to sample labels independently from the uniform distribution over $[0,1]$. We show this leads to a $(2-\sqrt{2})(k+1) \approx 0.585(k+1)$-approximation. By using a modified (but still independent) label distribution, we obtain a $0.549(k+1)$-approximation for the problem, as well as show that no independent distribution over labels can improve our analysis to below $0.542(k+1)$. Finally, we show a $0.5(k+1)$-approximation for bipartite graphs and for instances with structured LP solutions. Whether this ratio can be obtained in general is   open.
\end{abstract}

\newpage
\section{Introduction}

Given an edge-weighted directed acyclic graph $G=(V,A)$ and a parameter $k$, the DAG Edge Deletion problem of $k$, or \DED, is to delete a minimum weight set of edges so that the resulting graph has no paths with $k$ edges. \DED~was first introduced by Kenkre, Pandit, Purohit, and Saket \cite{Kenkre2015}, who gave a simple $k$-approximation and showed it was UGC-Hard to approximate better than $\lfloor \frac{k}{2} \rfloor$ for any constant $k$ using a result on the vertex deletion version of the problem by Svensson \cite{Svensson2012}. Kenkre et al. also studied the maximization version of \DED, known as Max $k$-Ordering, which can be studied on general directed graphs. In Max $k$-Ordering, we want to \textit{keep} as many edges as possible so that there are no paths of length $k$ remaining. They gave a 2-approximation for Max $k$-Ordering. Max $k$-Ordering and \DED~are NP-Hard for $k \ge 2$ \cite{Lampis2008,Klein2015}.

For \DED, it is important that the input is a DAG. Otherwise, it becomes impossible to approximate when $k$ depends on $n$ (which it may), as it can encode the Directed Hamiltonian Path problem, and requires prohibitive running time when $k$ is large.\footnote{Although, somewhat surprisingly, one can find a path of length $k$ in time $2^k \cdot \text{poly}(n,k)$ \cite{Williams2009}.} However, on a DAG, finding paths of arbitrary length is polynomial time solvable using a dynamic program. For Max $k$-Ordering, it is not necessary to assume the input is a DAG since one can still obtain an approximate solution by comparing to an optimal solution that may keep all the edges. In this work we will focus on the regime in which our running time cannot depend exponentially on $k$ and $k$ may depend on $n$. 

\DED~is identical to the following problem: find a partition $V_1,\dots, V_k$ of $G$ such that the weight of edges that do not go from a smaller indexed component to a larger one is minimized. This was observed by \cite{Kenkre2015}, and follows from the fact that a DAG with no paths of length $k$ has a topological ordering with at most $k$ layers. This gives a natural scheduling (or parallel computing) application for \DED: if the vertices represent tasks and the edges represent constraints that task $i$ must occur before task $j$, \DED~is equivalent to finding the minimum weight set of constraints to violate such that all the tasks can be completed within $k$ serial steps. \DED~is also motivated by a problem in VSLI design, similar to the DAG vertex deletion version (\DVD) introduced by Paik \cite{Paik1994} in the 90s. For \DVD, we are given a circuit as input and the goal is to upgrade the nodes of the circuit so that no electrical signal has to travel through $k$ consecutive un-upgraded  nodes. \DED~can be motivated identically, except here we upgrade the wires instead of the nodes. \DVD~also has been studied for its applications to secure multi-party computation, see for example \cite{CCMN25}.

After the work of Kenkre et al. introducing \DED, giving a linear program for it, and showing a $k$-approximation, Klein and Wexler \cite{Klein2015} showed a $\frac{2}{3}(k+1)$-approximation using a combinatorial rounding approach, as well as improved results for small values of $k$.  
They also showed an integrality gap of at least $\frac{k+1}{2}$ for the natural LP using a construction of Alon, Bollob{\'a}s, Gy{\'a}rf{\'a}s, Lehel, and Scott \cite{ABGLS07}. 
In \cref{sec:improved_alg}, we show the following improved approximation ratio and integrality gap:

\begin{theorem}\label{thm:main}
    There is a randomized approximation algorithm for \DED~with approximation ratio $0.549(k+1)$. In particular, given a solution $x$ to the natural LP \eqref{eq:DED_LP}, there is a randomized algorithm producing a solution of expected cost at most $0.549(k+1)\cdot c(x)$, where $c(x)$ is the cost of the LP. Therefore, the integrality gap of LP \eqref{eq:DED_LP} is at most $0.549(k+1)$. 
\end{theorem}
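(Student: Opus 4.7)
The plan is to design an LP-based randomized rounding framework that assigns each vertex an independent random label in $[0,1]$ and uses those labels to induce a partition into $k$ layers. I would first solve the natural LP \eqref{eq:DED_LP}, obtaining fractional edge deletion values $x_e^*$ and vertex positions $y_v^* \in [0,1]$ coupled by a constraint of the form $y_v^* - y_u^* + x_e^* \ge 1/k$ (up to a normalization). Given this solution, for every vertex $v$ independently, the rounding draws a noisy label $L_v$ from a distribution $D$ parameterized by $y_v^*$ (in the simplest case $L_v = y_v^* + \varepsilon_v \bmod 1$ with $\varepsilon_v\sim \text{Unif}[0,1]$; in the refined case $\varepsilon_v$ is drawn from a carefully shaped density). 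Each $v$ is then placed in the layer indexed by $\lceil k L_v\rceil$, and edges whose endpoints fall in the ``wrong'' layer order are charged to the solution.

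The core of the proof is a per-edge analysis. For a fixed edge $e=(u,v)$, I would bound $\Pr[\text{$e$ is cut}]$ by $\alpha\cdot x_e^*$ with $\alpha \le 0.549(k+1)$. Since the LP ties $y_v^* - y_u^*$ to $x_e^*$, the cut probability can be written as an integral over the noise distribution that depends only on the offset $y_v^* - y_u^*$ and $x_e^*$, reducing the worst case to a low-dimensional optimization problem. Summing over edges by linearity of expectation then yields an expected cost of at most $0.549(k+1)\cdot c(x^*)$, giving both the approximation guarantee and the integrality gap bound.

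The main obstacle will be choosing and analyzing the noise distribution so that the worst-case per-edge ratio is exactly $0.549(k+1)$. For the uniform distribution, the worst case gives $(2-\sqrt{2})(k+1)\approx 0.585(k+1)$ as mentioned in the abstract; pushing the ratio down to $0.549$ requires shaping the density non-uniformly, likely from a family with a small number of tunable parameters (or a piecewise polynomial form derived from KKT/duality-type reasoning). My plan would be to first write the cut probability as a clean expression in $y_u^*, y_v^*, x_e^*$ and the density of $\varepsilon_v$, then identify the extremal LP configurations that are simultaneously tight, and finally solve the resulting finite-dimensional optimization to pin down the best achievable constant, which the paper computes to be $0.549$. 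The lower bound of $0.542(k+1)$ announced in the abstract suggests the tight constraints in this optimization come from several configurations at once, so the analysis must be carefully balanced across these.
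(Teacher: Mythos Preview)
Your proposal diverges from the paper's approach in a way that introduces a real gap. The LP \eqref{eq:DED_LP} has only edge variables $x_e$; there are no vertex potentials $y_v^*$, and the paper never introduces any. Instead, the labels $\ell(v)$ are drawn i.i.d.\ from a single fixed distribution on $[0,1]$ that is \emph{completely independent of the LP solution}. The LP values enter only through the deletion rule: an edge $e=(u,v)$ is cut exactly when $\ell(v)-\ell(u)\le (k+1)x_e-1$. Feasibility is then a two-line telescoping argument along any length-$k$ path, and the per-edge bound reduces to the purely analytic question of bounding $\sup_{t\in[-1,1]} F(t)/(t+1)$, where $F$ is the CDF of $X-Y$ for $X,Y$ i.i.d.\ from the label distribution. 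The uniform distribution gives $2-\sqrt{2}$, and the density $p(x)=\tfrac{2}{3}+\tfrac{23}{3}(1-2x)^{22}$ (found by search and verified numerically) yields the constant below $0.549$.

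Your framework, by contrast, assigns layers via $\lceil kL_v\rceil$ and cuts edges that land in the ``wrong layer order,'' a rule that does not reference $x_e$ at all. With independent noise (as you specify), the cut probability for $(u,v)$ is then a fixed quantity of order $\tfrac12$ regardless of $x_e$: indeed, if $\varepsilon_v\sim\mathrm{Unif}[0,1]$ then $L_v=y_v^*+\varepsilon_v\bmod 1$ is uniform irrespective of $y_v^*$, so the potentials wash out entirely and $\Pr[\lceil kL_v\rceil\le\lceil kL_u\rceil]=(k+1)/(2k)$. You therefore cannot bound $\Pr[e\text{ cut}]\le \alpha x_e$ for edges with small $x_e$, and the per-edge charging collapses. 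The missing idea is that $x_e$ must appear in the \emph{cutting threshold}, not in the label distribution; once that change is made, the vertex potentials become unnecessary and the whole analysis becomes the one-dimensional optimization above, with no ``extremal LP configurations'' to balance.
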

Note that it is easy to derandomize our algorithm up to an arbitrary loss in precision using the method of condition expectation. We will not discuss this as it is standard.

Our algorithm rounds using a distribution over vertex labels in $[0,1]$. The most natural distribution to pick is uniform over the interval $[0,1]$, which leads to an approximation ratio of $(2-\sqrt{2})(k+1) \approx 0.585(k+1)$. The above theorem, giving a factor of $0.549(k+1)$, is obtained using a distribution which has more mass near 0 and 1. A natural question is what the optimal distribution over labels is. While we are unable to answer this question precisely, in \cref{sec:lowerbound}, we show that there is no independent distribution obtaining better than $0.542(k+1)$ using our framework. The lower bound has some similarity to recent work in additive combinatorics on Sidon sets, where one needs a lower bound on the supremum of the probability density function of autoconvolutions \cite{CS17,MV10}.  

We also show a $0.5(k+1)$-approximation for \textit{bipartite} DAGs using a correlated labeling distribution. As the UGC-Hardness result of Kenkre et al. \cite{Kenkre2015} holds for bipartite graphs, this rounding is optimal up to constants under the UGC. Finally, we show a $0.5(k+1)$-approximation for structured LP solutions using an independent label distribution. In particular, if $x$ is an optimal LP solution for an instance and there exists some $c > 0$ so that $x_e \in \{0,c\}$ for all $e \in A$, then we show a randomized $0.5(k+1)$-approximation. (In reality, we show something slightly stronger than this, see \cref{sec:extras} for more details.) This rounding is optimal, as the integrality gap example in \cite{Klein2015} has this property. 

\subsection{Hardness Results and Other Related Work}

In 2012, Svensson~\cite{Svensson2012} gave lower bounds for DAG Vertex Deletion. In particular, he showed that for any constant $k$, it is hard to approximate \DVD~with a ratio better than $k$ assuming the UGC. A $k$-approximation for this problem is easy to obtain, so the approximability of \DVD~is completely understood (assuming the UGC) for all constants $k$. When $k$ may depend on $n$, however, the problem is still not fully understood, even under the UGC, as the lower bound no longer holds. The vertex deletion version has also been studied on undirected graphs for small $k$ \cite{Bresar2011, Tu2011}. As for NP-Hardness, there is an approximation preserving reduction from Vertex Cover to DVD(2) and to DED(4), so \DVD~is NP-Hard to approximate better than $\sqrt{2}$ for $k \ge 2$, as is \DED~for $k \ge 4$. NP-Hardness of $\sqrt{2}$ for Vertex Cover was shown by Khot, Minzer, and Safra \cite{KMS17} in 2017, improving upon the classical result of Dinur \cite{Dinur2005}. 

Notice that the NP-Hardness result works for \textit{any} $k \ge 2$ for \DVD~and any $k \ge 4$ for \DED, where $k$ may depend on $n$, as given hardness for parameter $c$, one can add paths of length $k-c$ with infinite weight vertices or edges to every vertex. This reduction works for the UGC-Hardness results as well, meaning that for any constant $k$ and any $k' \ge k$ (where $k'$ may depend on $n$), DED($k'$) is still UGC-Hard to approximate better than $\lfloor \frac{k}{2}\rfloor$ (or $k$ for DVD($k'$)). In other words, both problems do not become easier as $k$ grows. 

As noted in \cite{Kenkre2015}, \DED ~is related to the well-studied Maximum Acyclic Subgraph Problem (MAS). Here we are given a directed graph (which is not a DAG) and asked to find an acyclic subgraph with as many edges as possible. Note that the maximization version of \DED, Max $k$-Ordering, is a generalization of MAS, recovering this problem when $k$ is set to $n$. Guruswami, Manokaran, and Raghavendra showed that MAS is UGC-Hard to approximate better than $2-\epsilon$ for any $\epsilon > 0$ \cite{GMR08}, and Kenkre et al. \cite{Kenkre2015} note that this implies UGC-Hardness of $2-o_k(1)$ for Max $k$-Ordering.
Also related is the Restricted Maximum Acyclic Subgraph (RMAS) problem, where each vertex has a finite set of possible integer labels, and the goal is to pick a valid labeling that maximizes the number of edges going from a smaller to a larger label in the assignment.  This problem was introduced by Khandekar, Kimbrel, Makarychev, and Sviridenko \cite{KKMS09} and its approximability was studied by Grandoni, Kociumaka, and Włodarczyk \cite{Grandoni15}, who gave a $2 \sqrt{2}$-approximation. RMAS clearly generalizes MAS. To the best our knowledge, a minimization version of RMAS on DAGs has not been studied. 

A more general setting is the $k$-hypergraph vertex cover problem for which a simple rounding gives a $k$-approximation. For the setting where the underlying hypergraph is also $k$-partite, a theorem of Lov\'asz (see~\cite{Aharoni1996}) gives a $0.5k$-approximation.


\subsection{Formulation as a CSP}

\DED~can easily be stated as a constraint satisfaction problem. In particular, one can create a variable $x_v$ for each vertex $v \in V$ and allow $x_v \in \{1,2,\dots,k\}$. Then, we have a set of constraints given by the edges: for each $e=(u,v)$ we create a constraint that $x_u < x_v$. Max $k$-Ordering is then the problem of maximizing the number of satisfied constraints, and \DED ~is the problem of minimizing the number of unsatisfied constraints. So, both problems fall under the purview of Raghavendra's UGC-optimal algorithm for CSPs \cite{Ragha08} \textit{whenever $k$ is an absolute constant}. However, the approximation ratio of this algorithm for constant $k$ is not known, and the runtime of the rounding step depends exponentially on $k$. (In addition, for \DED, the objective function is negative, and there is an additive error in Raghavendra's algorithm which could produce issues for this approach.) So, while this is an important view on the problem, it does not give satisfactory guarantees as $k\to\infty$ or when $k$ depends on $n$. 

\subsection{Organization of the Paper}

In \cref{sec:improved_alg}, we show our improved approximation algorithm. Our approach is to assign labels $\ell(v) \in [0,1]$ to each vertex $v$ independently and cut edges $e=(u,v)$ for which $\ell(v)-\ell(u)$ is small compared to $x_e$. By choosing an appropriate definition of small, the feasibility of the LP solution then implies cutting these edges results in a feasible solution to \DED. We first show that drawing labels $\ell(v) \sim \text{Uniform}(0,1)$ results in a $(2-\sqrt{2})(k+1) \approx 0.585(k+1)$-approximation, and then show a modified distribution that results in an algorithm with ratio slightly below $0.549(k+1)$. 

In \cref{sec:lowerbound}, we show that no independent labeling distribution can improve the approximation ratio of our algorithm to below  $0.542(k+1)$, at least within our analysis framework, which bounds the probability each edge is cut by $\alpha x_e$ to obtain an approximation ratio of $\alpha$.

Finally, in \cref{sec:extras}, we give our two additional results showing $0.5(k+1)$-approximation algorithms for bipartite DAGs and instances with structured LP solutions. 

\section{Improved Approximation Algorithm}\label{sec:improved_alg}

The following is the LP from Kenkre et al.~\cite{Kenkre2015} which they used to achieve a $k$-approximation for edge-weighted DAGs, where $c_e$ gives the cost of the edge $e$ and $x_e$ denotes the relaxation of the decision variable on whether edge $e$ is cut.
\begin{equation}\label{eq:DED_LP}
\begin{aligned}
\min\;& \sum_{e\in A} c_e\,x_e, \\[6pt]
\text{s.t.}\;& \sum_{e\in P} x_e \;\ge\; 1, \quad &&\forall\text{ paths }P\text{ of length }k,\\[3pt]
& x_e \;\ge\; 0, \quad &&\forall\,e\in A.
\end{aligned}
\end{equation}

Although there are an exponential number of constraints, a separation oracle can be implemented in polynomial time, as the input graph is a DAG. Therefore by the ellipsoid method, this LP can be solved in polynomial time. Note we will use $c(x)$ to denote $\sum_{e \in A} c_ex_e$ and for a set of edges $S \subseteq A$, $c(S) = \sum_{e \in S} c_e$.


\subsection{Algorithmic Framework}

Our algorithm is as follows, and depends on a \textit{label distribution} $\mu: [0,1]^V \to \R_{\ge 0}$ over label sets $\ell \in [0,1]^V$. We will use $\ell(v)$ to denote the label of vertex $v$. 

First we solve \eqref{eq:DED_LP} to obtain an optimal solution $x$. Then we draw a label set $\ell$ according to $\mu$, and delete all edges $e=(u,v)$ with 
$$\ell(v) - \ell(u) \le (k+1)x_e - 1$$
We will focus on upper and lower bounds for distributions in which the label of every vertex is drawn independently from the same  fixed distribution over $[0,1]$. However, in \cref{sec:extras} we will use a distribution that is not independent. Before analyzing the performance of two label distributions, we show now that it results in a feasible solution. 

\begin{lemma} For any solution $x$ to \eqref{eq:DED_LP} and any set of labels $\ell \in [0,1]^V$, deleting all the edges with $\ell(v) - \ell(u) \le (k+1)x_e - 1$ results in a feasible solution.\end{lemma}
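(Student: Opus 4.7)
The plan is to argue by contradiction: suppose some path of length $k$ survives the deletion, and derive that the labels of its endpoints must differ by more than $1$, which is impossible since all labels lie in $[0,1]$.

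More concretely, suppose $P = v_0 \to v_1 \to \cdots \to v_k$ is a path of $k$ edges $e_1, \ldots, e_k$ (with $e_i = (v_{i-1}, v_i)$) that remains after we delete every edge satisfying $\ell(v) - \ell(u) \le (k+1)x_e - 1$. Since no edge of $P$ is deleted, each $e_i$ must satisfy the strict inequality
\[
\ell(v_i) - \ell(v_{i-1}) > (k+1)x_{e_i} - 1.
\]
Summing these $k$ inequalities telescopes the left-hand side:
\[
\ell(v_k) - \ell(v_0) > (k+1)\sum_{i=1}^{k} x_{e_i} - k.
\]

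Now I would invoke LP feasibility. Since $P$ is a path of length $k$, the constraint in \eqref{eq:DED_LP} gives $\sum_{i=1}^{k} x_{e_i} \ge 1$, hence
\[
\ell(v_k) - \ell(v_0) > (k+1) - k = 1.
\]
But $\ell(v_0), \ell(v_k) \in [0,1]$ forces $\ell(v_k) - \ell(v_0) \le 1$, a contradiction. Therefore no length-$k$ path survives, and the remaining edges form a feasible \DED{} solution.

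There is no real obstacle here: the proof is a one-line telescoping argument combined with the LP path constraint, and the threshold $(k+1)x_e - 1$ in the deletion rule is essentially chosen so that these two bounds precisely match. The only minor care is to keep the inequalities strict versus non-strict consistently (an edge survives iff $\ell(v) - \ell(u) > (k+1)x_e - 1$), but this does not affect the contradiction.
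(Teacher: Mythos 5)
Your proposal is correct and follows essentially the same argument as the paper: assume a length-$k$ path survives, sum the strict survival inequalities so the labels telescope, invoke the LP path constraint $\sum_{e\in P} x_e \ge 1$ to conclude the endpoint labels differ by more than $1$, contradicting $\ell \in [0,1]^V$. No issues.
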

\begin{proof}
 Let $P$ be a path of length $k$ beginning at vertex $u$ and ending at $v$. By way of contradiction, suppose no edges in the path are deleted. Then:
\begin{align*}
\ell(v) & > \ell(u) +  \sum_{e \in P} [(k+1)x_e - 1]\\
& \ge \sum_{e \in P} [(k+1)x_e - 1] & \text{Since $\ell(u) \ge 0$}\\
& = \sum_{e \in P} (k+1)x_e - k  & \text{Since $P$ is of length $k$}\\
& \ge (k+1) - k & \text{Since $x$ is a feasible solution} \\
& = 1
\end{align*}
This gives a contradiction because $\ell(v) \in [0,1]$. Note the first inequality is strict because we keep an edge $e=(a,b)$ only if $\ell(b) - \ell(a) > (k+1)x_e - 1$. 
\end{proof}

To bound the approximation ratio of our algorithm, we will use the following.
\begin{lemma}\label{lem:approxfactor}
    Let $\mu$ be a labeling distribution and  $\alpha \ge \frac{1}{2}$. If for any edge $e=(u,v)$ and any $t \in [-1,1]$ we have
    $$\frac{\Pr[\ell(v)-\ell(u)\leq t]}{t+1} \le \alpha$$
    Then, using $\mu$ in the above algorithm produces a solution of expected cost at most $\alpha(k+1) \cdot c(x)$ and results in a randomized $\alpha(k+1)$-approximation.
\end{lemma}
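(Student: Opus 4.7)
The plan is to bound, for each edge $e=(u,v)$, the probability that $e$ is cut, and then apply linearity of expectation. By construction, $e$ is cut exactly when $\ell(v)-\ell(u) \le (k+1)x_e - 1$. I would therefore set $t_e := (k+1)x_e - 1$ and try to invoke the hypothesis with this choice of $t$, which would immediately give $\Pr[e \text{ is cut}] \le \alpha\,(t_e+1) = \alpha(k+1)x_e$.

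The main obstacle is that the hypothesis is stated only for $t \in [-1,1]$, so one has to split into cases based on the size of $x_e$. When $x_e \le 2/(k+1)$, one has $t_e \in [-1,1]$ and the hypothesis applies directly. When $x_e > 2/(k+1)$, the hypothesis is not available, but the probability is at most $1$ trivially, and the desired bound $1 \le \alpha(k+1)x_e$ follows from $(k+1)x_e > 2$ together with the assumption $\alpha \ge \tfrac12$. (One should also note the edge case $x_e=0$ is handled because the probability is still at most $\alpha(k+1)\cdot 0 = 0$ if $\ell(v)-\ell(u) \le -1$ has probability zero; since $\ell \in [0,1]^V$, this event has probability zero unless $\ell(v)=0$ and $\ell(u)=1$, and the hypothesis at $t=-1$ forces this to be zero anyway.) This is the role of the assumption $\alpha \ge \tfrac12$ in the lemma.

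Once the per-edge bound $\Pr[e \text{ is cut}] \le \alpha(k+1)x_e$ is established in all cases, the remainder is immediate: the expected cost of the returned solution is
\begin{equation*}
\sum_{e \in A} c_e \cdot \Pr[e \text{ is cut}] \;\le\; \alpha(k+1) \sum_{e \in A} c_e x_e \;=\; \alpha(k+1)\,c(x).
\end{equation*}
Combined with the previous lemma, which guarantees that the deleted edges form a feasible \DED{} solution for every realization of $\ell$, this shows the algorithm is a randomized $\alpha(k+1)$-approximation, since $c(x)$ lower bounds the optimum.
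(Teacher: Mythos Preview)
Your proposal is correct and follows essentially the same approach as the paper: set $t=(k+1)x_e-1$, split into the cases $t\in[-1,1]$ (where the hypothesis applies directly) and $t>1$ (where the trivial bound $1\le \alpha(k+1)x_e$ uses $\alpha\ge\tfrac12$), and conclude by linearity of expectation. Your treatment is in fact slightly more careful in handling the boundary case $x_e=0$ (equivalently $t=-1$), which the paper leaves implicit.
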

\begin{proof}
Let $e=(u,v)$ be an edge. We will show that 
\begin{equation}\label{eq:lem22}
    \P{\ell(u) - \ell(v) \le (k+1)x_e - 1} \le (k+1)\alpha \cdot x_e
\end{equation}
So long as we have this, the claim follows, because if $S$ is the set of edges cut by the algorithm, then:
    \begin{align*}
        \E[c(S)] &= \sum_{e=(u,v) \in A} \P{\ell(u) - \ell(v) \le (k+1)x_e - 1}c_e \\
        &\le 
        \alpha(k+1) \sum_{e \in A} c_ex_e = \alpha(k+1)\cdot c(x)
    \end{align*}
    To obtain \eqref{eq:lem22}, set $t=(k+1)x_e - 1$. Clearly, $t \ge -1$. If $t \ge 1$, then $x_e \ge \frac{2}{k+1}$, but then \eqref{eq:lem22} is trivially satisfied. So, $t=(k+1)x_e - 1 \in [-1,1]$ and by the assumption,
    $$\frac{\Pr[\ell(v)-\ell(u)\leq (k+1)x_e - 1]}{(k+1)x_e} \le \alpha.$$
    Rearranging gives \eqref{eq:lem22}.
\end{proof}

\subsection{Uniform Distribution}\label{subsec:uniform}

Here we consider the most natural label distribution: for every vertex, we simply assign it a label uniformly at random from the interval $[0,1]$, independent of all other vertices. First we show the following lemma:
\begin{lemma}\label{lem:unif}
    Consider the labeling distribution $\mu$ which independently assigns every vertex a label uniformly at random from $[0,1]$. Then, for all $u,v\in V$, $t \in [-1,1]$,
    $$\frac{\Pr[\ell(v)-\ell(u)\leq t]}{t+1} \le 2-\sqrt{2}$$
\end{lemma}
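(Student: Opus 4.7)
The plan is to compute the CDF of the difference $Z := \ell(v)-\ell(u)$ explicitly and then optimize the ratio $\Pr[Z\le t]/(t+1)$ on $[-1,1]$. Since $\ell(u),\ell(v)$ are i.i.d. uniform on $[0,1]$, $Z$ has the triangular density $f_Z(t) = 1-|t|$ on $[-1,1]$. Integrating gives the piecewise CDF
\begin{equation*}
F_Z(t) \;=\; \P{Z\le t} \;=\; \begin{cases} \dfrac{(t+1)^2}{2}, & t\in[-1,0], \\[4pt] 1-\dfrac{(1-t)^2}{2}, & t\in[0,1]. \end{cases}
\end{equation*}

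On the left half $t\in[-1,0]$, the ratio collapses beautifully: $F_Z(t)/(t+1) = (t+1)/2 \le 1/2$, which is strictly smaller than $2-\sqrt{2}\approx 0.586$, so this interval is easy and contributes nothing binding.

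All of the work is therefore on the right half $t\in[0,1]$, where I would define
\[
g(t) \;=\; \frac{1 - (1-t)^2/2}{t+1}
\]
and maximize by setting $g'(t)=0$. Clearing denominators, the critical point equation reduces to $(1-t)(1+t) = 1 - (1-t)^2/2$, i.e.\ $2t^2 = (1-t)^2$, and on $[0,1]$ this gives $t^\star = \frac{1}{\sqrt{2}+1} = \sqrt{2}-1$. Plugging in, $1-t^\star = 2-\sqrt{2}$ so the numerator is $1-(2-\sqrt{2})^2/2 = 2(\sqrt{2}-1)$ and the denominator is $\sqrt{2}$, yielding $g(t^\star) = 2-\sqrt{2}$. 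A quick check of the endpoints ($g(0)=1/2$, $g(1)=1/2$) confirms this is the global max on $[0,1]$.

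Combining the two cases gives the bound $\max_{t\in[-1,1]} F_Z(t)/(t+1) = 2-\sqrt{2}$, which is exactly the claim. There is no real obstacle here — the only thing to be careful about is the piecewise definition of $F_Z$ and making sure to check the critical point falls in the intended subinterval (which it does, since $\sqrt{2}-1\in(0,1)$).
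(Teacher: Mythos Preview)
Your proof is correct and follows essentially the same approach as the paper: compute the piecewise CDF of the difference of two independent uniforms, observe that on $[-1,0]$ the ratio is $(t+1)/2\le 1/2$, and on $[0,1]$ optimize by setting the derivative to zero, finding the maximizer $t^\star=\sqrt{2}-1$ with value $2-\sqrt{2}$. The only cosmetic differences are that the paper writes the CDF in expanded polynomial form and finds the critical point by directly computing the quotient-rule numerator $-\tfrac{1}{2}t^2-t+\tfrac{1}{2}$, whereas you factor the equation as $2t^2=(1-t)^2$; both routes are equivalent.
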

\begin{proof}
    Let $\ell(u),\ell(v)$ be draws from $\text{Uniform}(0,1)$. Then, the CDF $F(x)$ of $\ell(v)-\ell(u)$ is as follows:
    $$F(t) = \begin{cases} \frac{1}{2}t^2+t+\frac{1}{2} & -1 \le t < 0 \\ -\frac{1}{2}t^2 + t + \frac{1}{2} & 0 \le t \le 1 \end{cases}$$
If $-1 \le t < 0$, then
$$\frac{F(t)}{t+1} = \frac{(t+1)(\frac{1}{2}t+\frac{1}{2})}{t+1} = \frac{1}{2}(t+1) \le \frac{1}{2}$$
If $0 \le t \le 1$, then
$$\frac{F(t)}{t+1} = \frac{-\frac{1}{2}t^2 + t+\frac{1}{2}}{t+1}$$
Computing the derivative of this expression with respect to $t$, we obtain $\frac{-\frac{1}{2}t^2-t+\frac{1}{2}}{(t+1)^2}$, and setting this to 0, we know that the maximizer must be either $-1-\sqrt{2}$, $-1+\sqrt{2}$, or an endpoint $t=0$ or $t=1$. Of the two roots, only $-1+\sqrt{2}$ is in the range $t \in [0,1]$. When evaluated this is $2-\sqrt{2}$. Evaluating at $t=0$ or $t=1$ we obtain $\frac{1}{2}$, so the maximizer is $2-\sqrt{2}$. 
\end{proof}
Using \cref{lem:approxfactor}, we obtain the following as a corollary:
\begin{theorem}
Assigning $\ell(v)$ to each vertex $v$ independently and uniformly from $[0,1]$ results in a solution with expected cost at most $(2-\sqrt{2})(k+1) \cdot c(x) \approx 0.585(k+1) \cdot c(x)$. By using an optimal solution $x$, we obtain a $(2-\sqrt{2})(k+1)\approx 0.585(k+1)$-approximation.
\end{theorem}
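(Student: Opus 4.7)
The plan is to derive this statement as an immediate corollary of the two preceding lemmas, with essentially no additional work beyond bookkeeping. The uniform labeling distribution $\mu$ is one particular instance of a labeling distribution, so to apply \cref{lem:approxfactor} I only need to verify its hypothesis for a suitable constant $\alpha$.

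First, I would invoke \cref{lem:unif}, which shows precisely that under the independent uniform labeling, for all $u,v \in V$ and all $t \in [-1,1]$,
$$\frac{\Pr[\ell(v)-\ell(u) \le t]}{t+1} \le 2-\sqrt{2}.$$
Since $2-\sqrt{2} \approx 0.585 \ge \tfrac{1}{2}$, the hypothesis of \cref{lem:approxfactor} is met with $\alpha = 2-\sqrt{2}$. Applying that lemma directly yields a randomized algorithm whose output $S$ satisfies $\mathbb{E}[c(S)] \le (2-\sqrt{2})(k+1) \cdot c(x)$, establishing the expected cost bound in the theorem.

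For the approximation ratio statement, I would finish by noting that \eqref{eq:DED_LP} is a relaxation of \DED, so any feasible integral solution has cost at least $c(x)$ when $x$ is an optimal LP solution. In particular, the optimal integral value OPT satisfies $\text{OPT} \ge c(x)$, so $\mathbb{E}[c(S)] \le (2-\sqrt{2})(k+1) \cdot \text{OPT}$, yielding the claimed $(2-\sqrt{2})(k+1) \approx 0.585(k+1)$-approximation.

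There is no genuine obstacle here: all of the analytic content (computing the CDF of the difference of two uniform random variables and maximizing $F(t)/(t+1)$) has been absorbed into \cref{lem:unif}, and the passage from a per-edge cut probability bound to an expected cost bound is already handled by \cref{lem:approxfactor}. The only step requiring any care is checking that $2-\sqrt{2}$ indeed satisfies the side condition $\alpha \ge \tfrac{1}{2}$ needed to invoke \cref{lem:approxfactor}, which is immediate.
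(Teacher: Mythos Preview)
Your proposal is correct and matches the paper's approach exactly: the paper simply states that the theorem follows as a corollary of \cref{lem:approxfactor} once \cref{lem:unif} has been established, with no further argument. Your additional remarks (checking $2-\sqrt{2} \ge \tfrac{1}{2}$ and noting that the LP lower-bounds OPT) are minor bookkeeping details the paper leaves implicit.
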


\subsection{Improving Upon Uniform}\label{subsec:improved_distribution}

While using the uniform distribution is a natural candidate, it is (perhaps somewhat surprisingly) not optimal. If we wanted to obtain an $0.5(k+1)$ approximation using our framework, it is not difficult to see that the CDF of $\ell(v)-\ell(u)$ must be equal to $\text{Uniform}(-1,1)$ for all $u,v$ with an edge between them with $x_e > 0$. If this were obtainable, we would have:
$$\P{\ell(v) - \ell(u) \le (k+1)x_e - 1} = \frac{k+1}{2}x_e$$
for all values of $x_e \in [0,\frac{2}{k+1}]$. (Edges of value greater than $\frac{2}{k+1}$ are cut with probability 1 for any label distribution.)

So, in some sense, \textbf{our goal is to design an independent distribution over labels so that the CDF of $\ell(v)-\ell(u)$ is ``closer" to $\text{Uniform}(-1,1)$.} Using numerical verification and exhaustive search, we arrived at the probability distribution $\mathcal{D}$ over $[0,1]$ with density $p(x)=\frac{2}{3}+\frac{23}{3}(1-2x)^{22}$. If $X,Y$ are sampled independently from $\mathcal{D}$, we obtain a CDF for which $X-Y$ is closer to $\text{Uniform}(-1,1)$ in the relevant respect. We show that using this distribution yields a better approximation ratio than uniform labels. Due to the complexity of the resulting expressions, we use numerical verification.


\begin{figure}[H]
\centering
\includegraphics[scale=0.4]{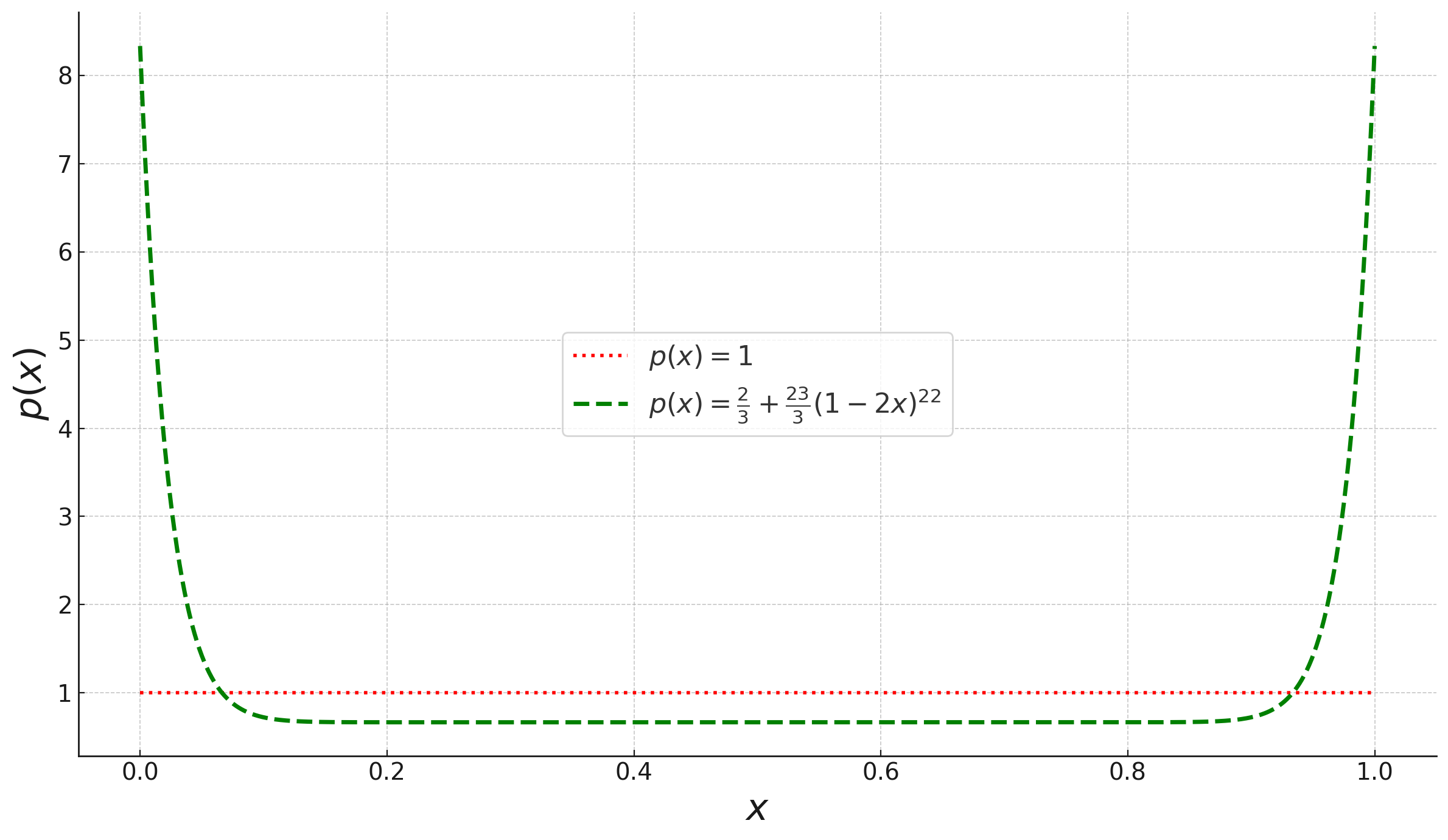}
\caption{Densities of $\mathcal{D}$ and $\text{Uniform}(-1,1)$ }
\end{figure}

\begin{theorem}
Assigning labels $\ell$ to all vertices from the interval $[0,1]$ independently according to the density $p(x) = \frac{2}{3}+\frac{23}{3}(1-2x)^{22}$ results in an a solution with expected cost below $0.549(k+1) \cdot c(x)$. This results in a randomized $< 0.549(k+1)$-approximation for \DED. 
\end{theorem}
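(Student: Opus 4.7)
The plan is to apply \cref{lem:approxfactor} with $\alpha$ a number just below $0.549$. This reduces the theorem to verifying the pointwise inequality
$$\sup_{t \in [-1,1]} \frac{F(t)}{t+1} < 0.549,$$
where $F$ is the CDF of $\ell(v)-\ell(u)$ when $\ell(u),\ell(v)$ are drawn independently from the density $p(x)=\frac{2}{3}+\frac{23}{3}(1-2x)^{22}$ on $[0,1]$. A preliminary check confirms that $p$ is a valid density: $(1-2x)^{22}$ integrates to $\frac{1}{23}$ on $[0,1]$, so the total mass is $\frac{2}{3}+\frac{23}{3}\cdot \frac{1}{23}=1$, and $p\ge 0$ as a sum of nonnegative terms.

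Next I would exploit symmetry. Since $\ell(u),\ell(v)$ are i.i.d., $\ell(v)-\ell(u)$ and its negative share a distribution, so $F(-t)=1-F(t)$. Consequently, for $t\in [-1,0]$, $F(t)/(t+1) = (1-F(-t))/(1-(-t))$, and the overall supremum equals
$$\max\left(\sup_{s \in [0,1]} \frac{F(s)}{s+1},\ \sup_{s \in [0,1]} \frac{1-F(s)}{1-s}\right),$$
reducing the task to two single-variable bounds on $[0,1]$. By analogy with the uniform case of \cref{lem:unif}, I expect the first supremum to be the binding one, but I would check both.

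To produce $F$, compute the convolution density $g(t)=\int p(x)\,p(x-t)\,dx$ over the appropriate range, then integrate to obtain $F$. Since $p$ is a polynomial of degree $22$, $g$ is piecewise polynomial, and $F$ is a piecewise polynomial on $[-1,0]$ and $[0,1]$. Expanding $(1-2x)^{22}$ via the binomial theorem and integrating term by term---most conveniently inside a computer algebra system---produces explicit expressions for $F$, and hence for the rational functions $h_1(t)=F(t)/(t+1)$ and $h_2(t)=(1-F(t))/(1-t)$. Their suprema can then be certified numerically: either by solving analytically for critical points (roots of polynomials of bounded degree) or by evaluating $h_i$ on a sufficiently fine grid together with a Lipschitz bound obtained from $\|h_i'\|_\infty$.

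The main obstacle is purely computational: the degree-$22$ self-convolution yields expressions too bulky for a clean hand calculation, which is why the text explicitly settles for numerical verification. Once the supremum is certified to lie strictly below $0.549$, \cref{lem:approxfactor} yields the approximation ratio $<0.549(k+1)$ immediately, and the particular form of $p$---with extra mass concentrated near $0$ and $1$---is precisely what one expects from the heuristic stated in \cref{subsec:improved_distribution}, namely that we want the density of $\ell(v)-\ell(u)$ on $[-1,1]$ to be closer to uniform than the triangular density produced by the uniform labeling.
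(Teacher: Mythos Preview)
Your proposal is correct and follows essentially the same route as the paper: reduce via \cref{lem:approxfactor} to the inequality $\sup_{t\in[-1,1]}F(t)/(t+1)<0.549$, then certify this numerically using a computer algebra system (the paper simply invokes Mathematica, reporting a maximum of about $0.5482$ near $t\approx 0.27$). Your added remarks---checking that $p$ integrates to $1$, the symmetry reduction, and the sketch of how one would certify the supremum rigorously via polynomial critical points or a grid plus Lipschitz bound---are all sound elaborations that the paper omits in favor of the direct Mathematica verification.
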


As in the uniform case, to obtain the theorem it is sufficient to prove the following claim and apply \cref{lem:approxfactor}.
    \begin{claim}\label{Claim:MaxCrazyDist}
        Suppose $X,Y$ are independent random variables sampled from $\mathcal{D}$, then $\frac{\Pr[X-Y\leq t]}{t+1}< 0.549$ for all $t\in [-1,1]$.
    \end{claim}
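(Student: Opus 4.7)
The plan is to compute the density and CDF of $Z := X - Y$ in closed form and then certify that $G(t) := \Pr[Z \le t]/(t+1) < 0.549$ on $[-1,1]$ via a finite check at critical points. Since $p$ is a polynomial of degree $22$ supported on $[0,1]$, the density
\[
h(t) = \int_{\max(0,-t)}^{\min(1,1-t)} p(y)\,p(y+t)\,dy
\]
is a piecewise polynomial in $t$: one polynomial piece of degree at most $45$ on $[-1,0]$, another on $[0,1]$. The symmetry $p(x)=p(1-x)$ gives $h(-t)=h(t)$, so the two pieces are reflections of one another, and the CDF $F(t)=\int_{-1}^t h(s)\,ds$ is a piecewise polynomial of degree at most $46$ with rational coefficients satisfying $F(-t)=1-F(t)$.

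First I would carry out the integration symbolically: expand $(1-2y)^{22}\,(1-2(y+t))^{22}$ via the binomial theorem and integrate term by term to obtain explicit rational-coefficient polynomials for $h$ and $F$. Then $G(t)=F(t)/(t+1)$ is continuous on $[-1,1]$, with $G(-1)=h(-1)=0$ (by L'Hopital, using $F(-1)=h(-1)=0$), $G(0)=F(0)=1/2$, and $G(1)=F(1)/2=1/2$; moreover $G$ is smooth inside each of the pieces $[-1,0]$ and $[0,1]$. Any interior critical point satisfies the polynomial equation
\[
(t+1)\,h(t)-F(t)=0,
\]
of degree at most $46$ on each piece. I would isolate all real roots of this polynomial inside $[-1,1]$ using Sturm sequences applied to the exact rational coefficients, evaluate $G$ at each isolated root, and check that every such value is strictly less than $0.549$.

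The main obstacle is certifying the strict inequality rigorously across the whole continuum rather than only at the sampled critical points. The cleanest route is to keep all arithmetic rational (or use interval arithmetic): the critical points are then isolated exactly as algebraic numbers, and the evaluations $G(t^\ast)$ can be compared with $0.549$ at arbitrary precision. A purely elementary fallback is to grid $[-1,1]$ at some spacing $\delta$, bound $|G'|$ by a finite constant $M$ from the explicit formulas for $h$ and $F$ (using that $h$ is bounded on $[-1,1]$, that $t+1$ is bounded below away from $-1$, and that near $t=-1$ both $F(t)$ and $(t+1)h(t)$ vanish to compatible orders so that $G'$ stays bounded), and then check pointwise that $G(t_i)\le 0.549-M\delta$. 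Either approach reduces the continuous claim to a finite numerical computation, which is precisely what the authors' remark ``we use numerical verification'' anticipates.
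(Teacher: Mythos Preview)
Your proposal is correct and aligns with the paper's approach: the paper's ``proof'' is literally a one-line appeal to a Mathematica computation (code reproduced in the appendix) that builds the CDF of $X-Y$ symbolically and numerically maximizes $F(t)/(t+1)$ over $[-1,1]$, reporting a maximum of about $0.5482$ near $t\approx 0.27$. Your plan is a more explicit and more rigorous version of the same computation---you spell out the piecewise-polynomial structure, the symmetry $h(-t)=h(t)$, and how to turn the verification into a finite certified check via Sturm sequences or a derivative-bounded grid---so you are supplying rigor the paper itself does not attempt.
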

    \begin{proof}
        This is verified by Mathematica. See \cref{sec:appendix} for the code used and an execution. The maximum of $\frac{\Pr[X-Y\leq t]}{t+1}$ occurs at approximately $t=0.27$ with value slightly below $0.5482$. \qedhere

    \end{proof}
    \begin{figure}[H]
  \centering
  \begin{subfigure}[b]{0.7\textwidth}
    \centering
    \includegraphics[width=\textwidth]{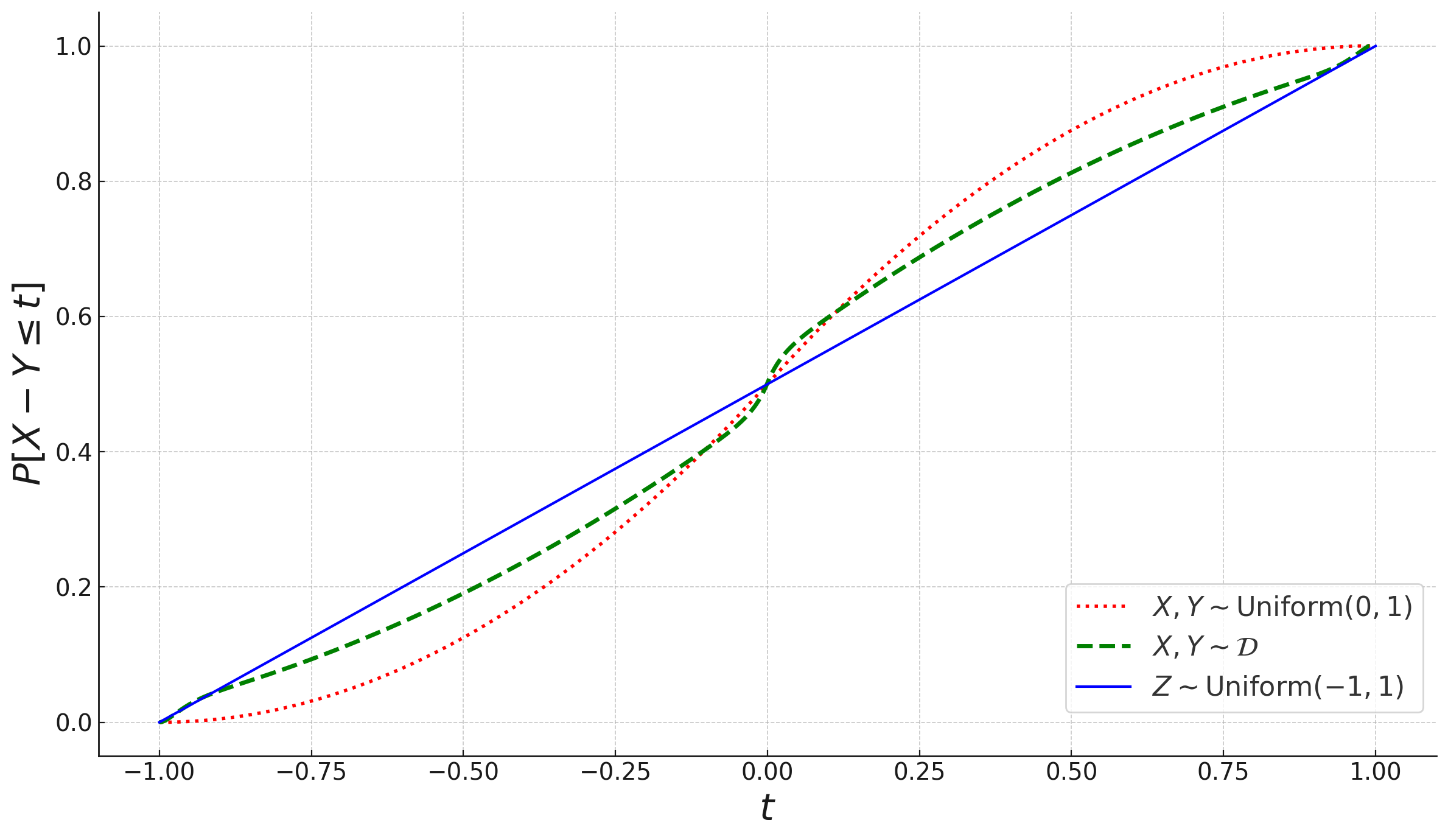}
    \caption{$\P{X-Y\leq t}$}
    \label{fig:2a}
  \end{subfigure}
  \hfill
  \begin{subfigure}[b]{0.7\textwidth}
    \centering
    \includegraphics[width=\textwidth]{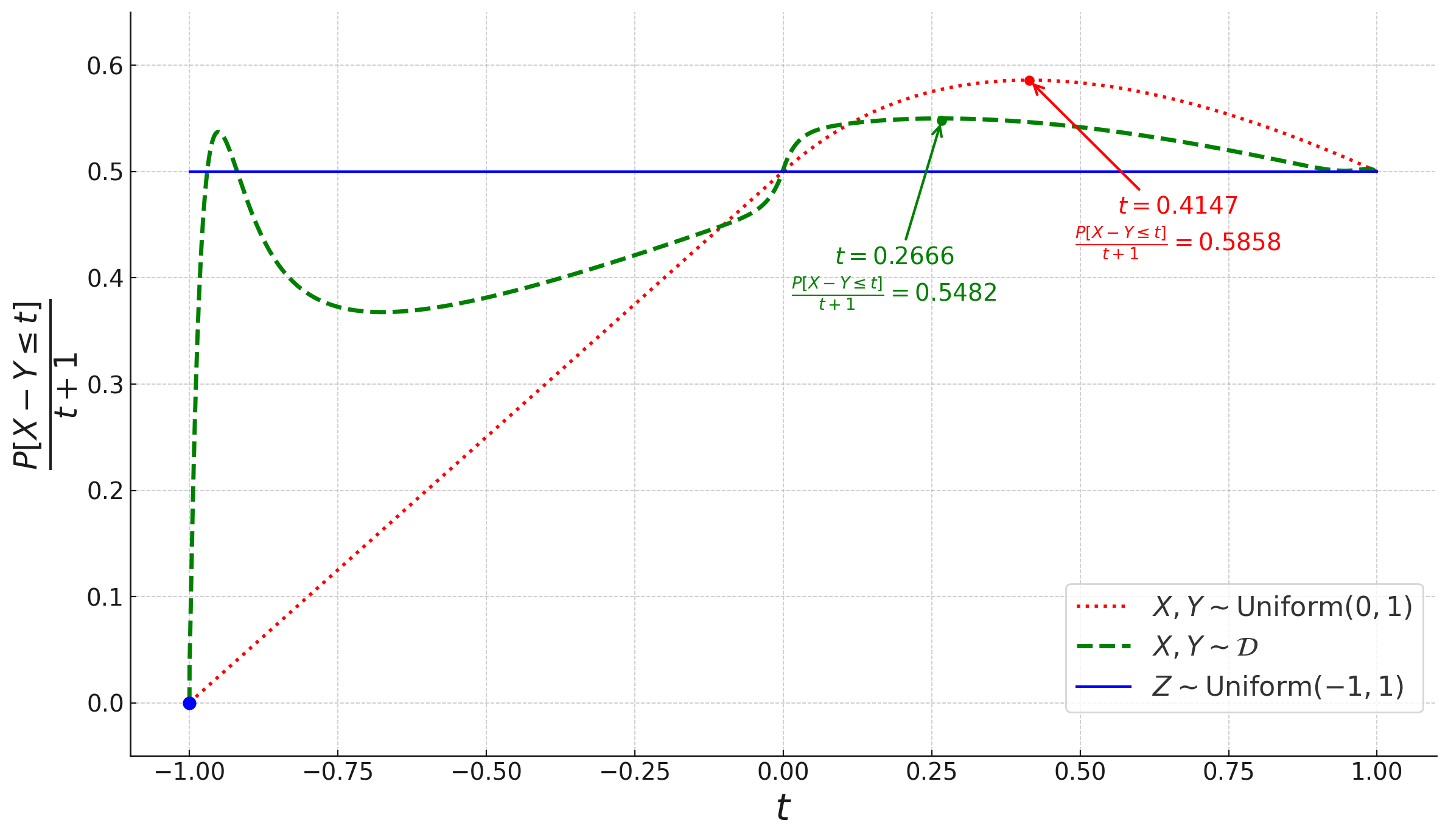}
    \caption{$\frac{\P{X-Y\leq t}}{t+1}$}
    \label{fig:2b}
  \end{subfigure}
  \caption{\cref{fig:2a} and \cref{fig:2b} compare the behavior of the difference \( X - Y \), where \( X, Y \) are sampled independently from two candidate label distributions: $\text{Uniform}(0,1)$ (red), and the refined distribution \( \mathcal{D} \) with density \( p(x) = \frac{2}{3} + \frac{23}{3}(1 - 2x)^{22} \) (green). As a reference, we also include the random variable \( Z \sim \text{Uniform}(-1,1) \) (blue), which represents a target distribution that would yield a $0.5$-approximation if realizable. While \cref{sec:lowerbound} proves that no independent label distribution over \([0,1]\) can make \( X - Y \sim \text{Uniform}(-1,1) \), the behavior of \( Z \) remains a useful benchmark. In both subfigures, the distribution \( \mathcal{D} \) yields a difference distribution that more closely mimics \( Z \) than $\text{Uniform}(0,1)$ does. In particular, as established in \cref{lem:unif} and \cref{Claim:MaxCrazyDist}, the maximum of \( \frac{P[X - Y \leq t]}{t + 1} \) occurs at \( t = 2 - \sqrt{2} \approx 0.4147 \) for $\text{Uniform}(0,1)$ and at \( t = 0.2666 \) for \( \mathcal{D} \), with corresponding values \( 0.5858 \) and \( 0.5482 \), respectively.
    }
   \label{2}
\end{figure}

        
\section{Lower Bound}\label{sec:lowerbound}

An interesting problem is to find the function that minimizes the approximation ratio when using an \textit{independent} distribution over labels: 

\begin{problem}\label{problem} Let $\mathcal{P}$ denote the set of probability distributions supported on $[0,1]$. Find $$\alpha := \inf_{D \in \mathcal{P}} \sup_{t \in [-1,1]} \frac{\mathbb{P}_{\ell(u), \ell(v) \sim D} [\ell(u) - \ell(v) \le t]}{t+1}.$$
\end{problem}
While this problem does not capture the possibility of correlated label distributions (which we exploit in the bipartite case in \cref{sec:extras}), we find it quite interesting on its own. While some related questions have been studied in additive combinatorics, for example by Cloninger and Steinerberger \cite{CS17} and Matolcsi and Vinuesa \cite{MV10}, we were unable to find a resolution of this question in the literature. We have been able to bound:
$$0.542 < \alpha < 0.549,$$
where the upper bound is from \cref{sec:improved_alg}. Here, we show a complete proof of a lower bound of $0.539$. 
 
 \begin{claim}\label{Claim:lowerbound for alpha}
$\alpha > 0.539.$
 \end{claim}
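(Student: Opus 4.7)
The plan is by contradiction. Suppose some distribution $D$ on $[0,1]$ yields $F(t)\le\alpha(t+1)$ for every $t\in[-1,1]$ with $\alpha\le 0.539$; I will derive a contradiction. After a standard mollification reducing to the case when $D$ has a density $f$, the random variable $W=\ell(u)-\ell(v)$ has density $g(t)=\int_0^1 f(x)f(x+t)\,dx$ on $[-1,1]$. I will use three structural facts about $g$: first, $g(-t)=g(t)$, equivalently $F(t)+F(-t)=1$; second, the Fourier transform $\hat g(\xi)=|\hat f(\xi)|^2\ge 0$ (positive definiteness); and third, $g(0)=\int f^2\ge 1$ by Cauchy--Schwarz on $f$ against $\mathbf{1}_{[0,1]}$. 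From the first fact alone, integrating the hypothesis over $[-1,1]$ gives $\int_{-1}^{1}F\,dt=1\le 2\alpha$, so $\alpha\ge 1/2$; equality would force $g\equiv 1/2$, contradicting the third fact. So $\alpha>1/2$ strictly, and the task is to make this margin quantitative.

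To get past $1/2$, I would pick a non-negative weight $w$ on $[-1,1]$, set $W(s):=\int_{s}^{1}w(t)\,dt$, and compare both sides of the identity
\begin{equation*}
    \int_{-1}^{1} F(t)\,w(t)\,dt \;=\; \int_{-1}^{1} g(s)\,W(s)\,ds.
\end{equation*}
The hypothesis upper-bounds the left-hand side by $\alpha\int(t+1)w(t)\,dt$. For a matching lower bound on $\int gW$, I would decompose $W$ as a constant (handled by $\int g=1$), plus a non-negative combination $\sum c_i\cos(2\pi\xi_i s)$ (whose contributions are $c_i\hat g(\xi_i)\ge 0$ by positive definiteness), plus a ``peak'' concentrated near $s=0$ (whose contribution is at least its mass times $g(0)\ge 1$). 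Choosing $w$ and the decomposition so that the lower bound strictly exceeds $0.539\int(t+1)w\,dt$ gives the contradiction.

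The main obstacle is translating the \emph{pointwise} condition $g(0)\ge 1$ into a usable \emph{integral} lower bound, since a single density value carries no mass. I would handle this by approximating a Dirac at $0$ by a narrow positive-definite mollifier of width $\eta$, taking the limit $\eta\to 0$; the first two facts survive the limit cleanly. Optimizing the frequencies $\xi_i$ and the bump shape is a finite-dimensional convex program whose dual certificate yields the numerical constant; the remaining gap between $0.539$ and the upper bound $0.549$ from \cref{sec:improved_alg} reflects the slack in this discretization, consistent with the $(0.542,0.549)$ range the authors announce for the true optimum $\alpha$.
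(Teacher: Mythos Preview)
Your outline shares the two structural ingredients the paper actually uses---symmetry of $Z=X-Y$ (giving the two-sided CDF sandwich) and positive definiteness of the characteristic function ($\hat g\ge 0$, i.e.\ $\mathbb{E}[\cos(tZ)]\ge 0$)---but the execution has a real gap, and the third ingredient you add ($g(0)\ge 1$) cannot be deployed the way you describe. You insist on a \emph{non-negative} weight $w$, so $W(s)=\int_s^1 w$ is automatically non-increasing on $[-1,1]$; such a $W$ cannot contain an interior ``peak concentrated near $s=0$'' no matter how you mollify, so the Dirac-at-$0$ device for harvesting $g(0)\ge 1$ is incompatible with the very hypothesis that lets you bound $\int Fw\le\alpha\int(t+1)w$. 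Dropping $w\ge 0$ breaks that upper bound. Moreover, you only ever invoke the upper envelope $F(t)\le\alpha(t+1)$; to get anything past $\tfrac12$ one must also use the symmetric lower envelope $F(t)\ge 1-\alpha(1-t)$, since any useful test function will change sign. Finally, no concrete choice of $w$ (or frequencies $\xi_i$) is given and no computation is carried out---the constant $0.539$ is asserted, not derived.

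The paper's proof is short and fully explicit, and it does \emph{not} use $g(0)\ge 1$ at all. It tests directly with $\cos(tZ)$ for a fixed $t\in(\pi,2\pi)$, observes $\mathbb{E}[\cos(tZ)]=\mathbb{E}[\cos tX]^2+\mathbb{E}[\sin tX]^2\ge 0$, writes $\mathbb{E}[\cos(tZ)]=\cos t+t\int_{-1}^1\sin(tx)\,F(x)\,dx$ by integration by parts, subtracts the same identity for $U\sim\mathrm{Uniform}(-1,1)$, and then bounds the resulting integral $\int_0^1\sin(tx)\bigl(F(x)-\tfrac{x+1}{2}\bigr)\,dx$ using \emph{both} sides of the sandwich (upper envelope where $\sin(tx)>0$, lower where $\sin(tx)<0$, splitting at the single root $\pi/t$). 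This yields the closed form $\alpha\ge\tfrac12+\sup_{t\in(\pi,2\pi)}\frac{-\sin t}{6t+2\sin t}$, and $t=4.5$ already gives $>0.539$. If you strip the unusable $g(0)$ piece from your plan and add back the lower envelope, what remains is exactly this argument---but you would still need to exhibit the test function and do the arithmetic.
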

 
 \begin{proof}
 Let $D$ be a distribution supported on $[0,1]$ and $X, Y$ be independent samples from $D$. Suppose that 
 \[
\Pr[X-Y \leq x] \leq \alpha (x+1) \quad \text{for all} \quad x \in [-1,1].
\]

Denote $Z := X-Y$. First note that symmetry of \(Z\) gives the complementary lower bound
\[
\Pr[Z \leq x] = \Pr[Z \ge -x] = 1 - \Pr[Z \le -x] \geq 1-\alpha(1-x), \quad \text{for all} \quad x \in [-1,1].
\]
These inequalities imply
\[
\\-(\alpha - \tfrac12)(1-x) \leq \Pr[Z \leq x] - \frac{x+1}{2} \leq (\alpha - \tfrac12)(1+x) \quad \text{for all} \quad x \in [-1,1]. \quad (\star)
\]

Fix some $t \in (\pi, 2\pi)$ that we choose later (so $\sin t < 0$) and observe that
\begin{align*}\mathbb{E}[\cos(tZ)] &= \mathbb{E}[\cos(t(X-Y))] \\ &= \mathbb{E}[\cos(tX)]\mathbb{E}[\cos(tY)] +  \mathbb{E}[\sin(tX)]\mathbb{E}[\sin(tY)] \\ &=\mathbb{E}[\cos(tX)]^2 +  \mathbb{E}[\sin(tX)]^2 \ge 0.\end{align*}
Integration by parts gives
\[
\mathbb{E}[\cos(tZ)] = \cos t + t \int_{-1}^{1} \sin(tx) \Pr[Z \leq x] \, dx.
\]
For \(U \sim \text{Uniform}(-1,1)\) (whose CDF is \(\frac{x+1}{2}\)), we have \(\mathbb{E}_{U \sim \text{Uniform}(-1,1)}  [\cos(tU)] = \frac{\sin t}{t}\), therefore
\[
\frac{\sin t}{t} = \cos t + t \int_{-1}^{1} \sin(tx) \cdot \left(\frac{x+1}{2}\right)\, dx.
\]
Using  $\mathbb{E}[\cos(tZ)] \ge 0$ and subtracting the two formulas above yields 
\[
 \frac{-\sin t}{t}\le \mathbb{E}[\cos(tZ)] - \frac{\sin t}{t}
= t \int_{-1}^{1} \sin(tx) \left( \Pr[Z \leq x] - \frac{x+1}{2} \right) dx.
\]
By symmetry, we have \[ \int_{-1}^{1} \sin(tx) \left( \Pr[Z \leq x] - \frac{x+1}{2} \right) dx = 2  \int_{0}^{1} \sin(tx) \left( \Pr[Z \leq x] - \frac{x+1}{2} \right) dx. \]
Since $\sin(tx)$ has only one root $\pi/t$ in $(0,1)$, we may use $(\star)$ to upper bound the integral
\[
\frac{1}{\alpha-\tfrac12}\int_{0}^{1} \sin(tx) \left( \Pr[Z \leq x] - \frac{x+1}{2} \right) dx \le \int_{0}^{\pi/t} \sin(tx)(1+x)\, dx - \int_{\pi/t}^1 \sin(tx)(1-x) \, dx,
\]
which evaluates to $\frac{3t + \sin t}{t^2}$. It follows that
\[ \alpha \ge \frac{1}{2} + \sup_{t \in (\pi, 2\pi)} \frac{-\sin t}{6t+2\sin t} \ge \frac{1}{2} + \frac{-\sin 4.5}{27+2\sin 4.5}> 0.539.\qedhere \]
 \end{proof}
 We remark that using the function $\cos(4.4X) + 0.29 \cos(8.9 X)$ instead of $\cos(4.5 X)$ in the above proof shows a slightly stronger lower bound $\alpha > 0.542$.
 
 From the \cref{Claim:lowerbound for alpha} it follows that every distribution $D \in \mathcal{P}$ has a corresponding CDF of $D-D$ which significantly exceeds the CDF of $\text{Uniform}(-1,1)$ at some point. Therefore we cannot attain a ratio of $\frac{1}{2}$. However, this may still be possible for correlated label distributions, and indeed, depending on the structure of the graph such distributions \textit{can} exist. In the next section we show they can be constructed easily for bipartite graphs. 

 

\section{Optimal Rounding for Special Cases}\label{sec:extras}

In this section, we show how to obtain our desired bound of $0.5(k+1)$ for two special cases. The result for bipartite graphs is tight up to constants, while the second result is tight as there is a matching integrality gap on a point with $x_e = \frac{1}{k}$ for all edges. 

\subsection{Bipartite Graphs}

Here we show that for any bipartite DAG there exists a distribution over labels so that the randomized algorithm from above achieves an approximation ratio of $0.5(k+1)$. Notably, the instances from Kenkre et al. used to obtain UGC hardness of $\lfloor 0.5k \rfloor$ are bipartite, so this is tight up to constants assuming the UGC.  

\begin{theorem}
    Let $G=(V,A)$ be bipartite with bipartition $(A',B')$ and $x$ be a feasible solution to \eqref{eq:DED_LP}. Then, there is an algorithm that produces a \DED ~solution with expected cost at most $(0.5)(k+1) \cdot c(x)$. 
\end{theorem}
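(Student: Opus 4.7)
The plan is to reuse the randomized rounding framework of \cref{sec:improved_alg}, but with a \emph{correlated} label distribution tailored to the bipartition. By \cref{lem:approxfactor} with $\alpha = \tfrac12$, it suffices to exhibit a joint distribution on $\ell \in [0,1]^V$ such that, for every arc $(u,v)$ and every $t \in [-1,1]$,
\[
  \Pr[\ell(v) - \ell(u) \le t] \;\le\; \tfrac{t+1}{2}.
\]
This inequality is attained with equality by $\mathrm{Uniform}(-1,1)$. The lower bound of \cref{sec:lowerbound} forbids attaining it with independent labels, so correlation is essential; the bipartite structure is what makes such a coupling available.

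The construction I propose is as simple as possible. Sample a single $Z \sim \mathrm{Uniform}(-1,1)$, and set
\[
  \ell(u) = \max(Z, 0) \ \text{ for every } u \in A',
  \qquad
  \ell(v) = \max(-Z, 0) \ \text{ for every } v \in B'.
\]
Both labels lie in $[0,1]$, so the feasibility lemma from \cref{sec:improved_alg} already guarantees that deleting the edges with $\ell(v) - \ell(u) \le (k+1)x_e - 1$ produces a valid \DED\ solution. Note that every vertex on the same side of the bipartition is assigned the \emph{same} label; this maximal intra-side correlation is the key departure from the independent setting.

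For the approximation analysis, I will use the identity $\max(Z,0) - \max(-Z,0) = Z$, verified by splitting on the sign of $Z$. Consequently, for any arc $e=(u,v)$: if $u \in A'$ and $v \in B'$ then $\ell(v) - \ell(u) = -Z$, while if $u \in B'$ and $v \in A'$ then $\ell(v) - \ell(u) = Z$. In either case the gap is distributed as $\mathrm{Uniform}(-1,1)$, so $\Pr[\ell(v)-\ell(u) \le t]/(t+1) = \tfrac12$ for every $t \in [-1,1]$. Invoking \cref{lem:approxfactor} with $\alpha = \tfrac12$ yields $\E[c(S)] \le 0.5(k+1)\cdot c(x)$, as desired. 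I do not foresee a technical obstacle: the only real insight required is to abandon independence in favor of giving every vertex on a side a common label, which is precisely what dodges the $\tfrac12$-barrier established in \cref{sec:lowerbound}.
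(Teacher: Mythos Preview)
Your proof is correct and takes essentially the same approach as the paper: both give every vertex on a side of the bipartition a common label so that $\ell(v)-\ell(u)$ is $\mathrm{Uniform}(-1,1)$ across every arc. Your parameterization via a single $Z\sim\mathrm{Uniform}(-1,1)$ with $\ell|_{A'}=\max(Z,0)$, $\ell|_{B'}=\max(-Z,0)$ is in fact exactly the paper's construction (sample $y\sim\mathrm{Uniform}(0,1)$ and flip a fair coin for which side gets $y$ and which gets $0$), with the sign of $Z$ playing the role of the coin and $|Z|$ the role of $y$.
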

\begin{proof}
Our algorithm is identical to before, except now we produce a correlated distribution over labels. To produce the labeling $\ell: V \to [0,1]$, we first sample $y \in [0,1]$ uniformly. Then, with probability $\frac{1}{2}$, assign $\ell(v) = 0$ for all $v \in A'$ and $\ell(v) = y$ for all $v \in B'$. Otherwise, assign $\ell(v) = y$ for all $v \in A'$ and $\ell(v) = 0$ for all $v \in B'$. 

For every edge $a=(u,v) \in A$, the distribution over $\ell(v)-\ell(u)$ is now uniform over $[-1,1]$, as with probability $\frac{1}{2}$ we obtain a uniformly random number between $0$ and $1$ and otherwise a uniformly random number between $-1$ and $0$. Therefore, if we now run the same algorithm as before, we have:
\begin{align*}
    \Pr[\text{$a$ is cut}] &= \Pr[\ell(v)-\ell(u) \le (k+1)x_e - 1]\\
    &= \frac{(k+1)x_e - 1 + 1}{2} && \text{As $\ell(v)-\ell(u) \sim \text{Uniform}(-1,1)$}\\
    &= 0.5(k+1)x_e
\end{align*}
as desired.
\end{proof}
It is unclear whether correlated distributions can be used to improve the approximation ratio in the general case. 

\subsection{Instances with Structured LP Solutions}

Here we show that an independent distribution over labels exists when we have a structured LP solution. It implies, for example, that if our LP solution has the property that $x_e \in \{0,c\}$ for all $e \in E$ for some fixed value $c$, we can obtain a $0.5(k+1)$-approximation, but is slightly more general (for example, we can obtain the same ratio if all edges have value at least $\frac{1.5}{k+1}$). 
\begin{theorem}
If $x$ is a feasible solution to \eqref{eq:DED_LP}, and for every edge $e$ with $x_e > 0$ we have:
$$\frac{1+\frac{1}{r+1}}{k+1} \le x_e < \frac{1+\frac{1}{r}}{k+1}$$
for some integer $r \ge 1$, then there is an algorithm that produces a solution of expected cost at most $\frac{k+1}{2} c(x)$. In the case that $r=1$, the upper bound is not necessary as we may simply cut all edges with value at least $\frac{2}{k+1}$ and round the remaining edges.
\end{theorem}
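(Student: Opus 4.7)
The plan is to use an independent label distribution supported on $r+1$ equally spaced atoms in $[0,1]$ and then verify via \cref{lem:approxfactor} that the resulting rounding achieves ratio $1/2$ on the range of edge values the hypothesis allows. Concretely, I would let $D$ be the uniform distribution on $\{0, 1/r, 2/r, \ldots, 1\}$, and draw each label $\ell(v)$ independently from $D$. Feasibility of the resulting edge cut is inherited from the earlier feasibility lemma since all labels lie in $[0,1]$.

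The core step is to bound the CDF of $X - Y$ for independent $X, Y \sim D$. Since $\Pr[X - Y = j/r] = (r+1-|j|)/(r+1)^2$ for $j \in \{-r, \ldots, r\}$, a direct summation gives $\Pr[X-Y \le 0] = (r+2)/(2(r+1))$. Because the support of $X - Y$ lies on the lattice $\{j/r\}$, the CDF is constant on the interval $[0, 1/r)$. Hence for every $t \in [1/(r+1),\, 1/r)$ we have
\[ \Pr[X - Y \le t] \;=\; \frac{r+2}{2(r+1)}, \]
which matches $(t+1)/2$ exactly at $t = 1/(r+1)$ and is strictly smaller for larger $t$ in this range.

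Under the theorem's hypothesis, each edge $e$ with $x_e > 0$ has $t_e := (k+1)x_e - 1 \in [1/(r+1),\, 1/r)$, so the rounding cuts $e$ with probability at most $(t_e + 1)/2 = (k+1)x_e/2$; summing over edges yields expected cost at most $\frac{k+1}{2}\,c(x)$. For the relaxed $r = 1$ case, edges with $x_e \ge 2/(k+1)$ are cut outright in a preprocessing step and contribute $c_e \le \frac{k+1}{2}\, x_e\, c_e$ apiece, while the remaining edges satisfy $x_e \in [3/(2(k+1)),\, 2/(k+1))$ and are handled exactly as above using the two-atom distribution on $\{0,1\}$.

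I do not anticipate a substantial obstacle beyond identifying the right distribution (equispaced atoms); once chosen, the CDF computation is a short combinatorial sum. The only subtle observation is that the inequality $\Pr[X - Y \le t] \le (t+1)/2$ is tight precisely at the lower endpoint $t = 1/(r+1)$, which corresponds to the worst-case edge value $x_e = (1 + 1/(r+1))/(k+1)$ allowed by the hypothesis --- so the bound in the theorem is genuinely achieved, not wasted, by this choice of distribution.
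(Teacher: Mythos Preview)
Your proposal is correct and follows essentially the same approach as the paper: both use the uniform distribution on the $r+1$ equispaced atoms $\{0,1/r,\ldots,1\}$, compute $\Pr[X-Y\le 0]=(r+2)/(2(r+1))$, observe the CDF is flat on $[0,1/r)$, and then use the lower bound $x_e\ge (1+\tfrac{1}{r+1})/(k+1)$ to conclude each edge is cut with probability at most $\tfrac{k+1}{2}x_e$. The only cosmetic difference is that you frame the bound via the ratio $\Pr[X-Y\le t]/(t+1)$ whereas the paper argues directly that the cut event is $\{\ell(v)\le\ell(u)\}$ and divides by $x_e$; the arithmetic is identical.
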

\begin{proof}
    Let $\mu_r$ be the uniform distribution over labels in the set $\{0,\frac{1}{r},\frac{2}{r},\dots,1\}$. We claim applying our framework by selecting each label independently from $\mu_r$ proves the theorem. 

    We cut an edge $e=(u,v)$ if 
    $$\ell(v) - \ell(u) \le (k+1)x_e - 1 < \frac{1+\frac{1}{r}}{k+1}(k+1)-1 = \frac{1}{r}$$
    Therefore, since $\ell(u) - \ell(v)$ is either negative, 0, or at least $\frac{1}{r}$, we cut an edge exactly when $\ell(v) \le \ell(u)$. So, the probability we cut any edge with $x_e > 0$ is:
    \begin{align*}
        \Pr[\ell(v) \le \ell(u)] &= 1-\Pr[\ell(v) > \ell(u)] \\
        &= 1-\sum_{i=0}^r \Pr\left[\ell(v) = \frac{i}{r}\right] \cdot \Pr\left[\ell(u) < \frac{i}{r}\right] \\
        &= 1-\sum_{i=0}^r \frac{1}{r+1} \cdot \frac{i}{r+1} = 1-\frac{r}{2(r+1)} = \frac{r+2}{2(r+1)}
    \end{align*}
    Now, the probability we cut an edge compared to $x_e$ (our approximation ratio) is:
    $$\frac{r+2}{2(r+1)} \cdot \frac{1}{x_e} \le \frac{r+2}{2(r+1)} \cdot \frac{k+1}{1+\frac{1}{r+1}} = \frac{k+1}{2}$$
    As desired.
\end{proof}

\section{Conclusion}

The main open question remaining is whether it is possible to obtain an algorithm with approximation ratio $0.5(k+1)$, or whether this is UGC-Hard. We are unaware of any instances with integrality gap above $0.5(k+1)$, so improving this bound is also of interest. Another open problem is whether it is NP-Hard to obtain an $O(1)$ approximation for \DED. 

\bibliographystyle{plain}
\bibliography{upm}

\begin{thebibliography}{10}

\bibitem{Aharoni1996}
Ron Aharoni, Ron Holzman, and Michael Krivelevich.
\newblock On a theorem of lov{\'a}sz on covers in $r$-partite hypergraphs.
\newblock {\em Combinatorica}, 16(2):149--174, 1996.

\bibitem{ABGLS07}
Noga Alon, Béla Bollobás, András Gyárfás, Jenő Lehel, and Alex Scott.
\newblock Maximum directed cuts in acyclic digraphs.
\newblock {\em Journal of Graph Theory}, 55(1):1--13, 2007.

\bibitem{Bresar2011}
Bo{\v s}tjan Bre{\v s}ar, Franti{\v s}ek Kardo{\v s}, J{\'a}n Katreni{\v c},
  and Gabriel Semani{\v s}in.
\newblock Minimum -path vertex cover.
\newblock {\em Discrete Applied Mathematics}, 159(12):1189 -- 1195, 2011.

\bibitem{CCMN25}
Pierre Charbit, Geoffroy Couteau, Pierre Meyer, and Reza Naserasr.
\newblock A note on low-communication secure multiparty computation via circuit
  depth-reduction.
\newblock In Elette Boyle and Mohammad Mahmoody, editors, {\em Theory of
  Cryptography}, pages 167--199, Cham, 2025. Springer Nature Switzerland.

\bibitem{CS17}
Alexander Cloninger and Stefan Steinerberger.
\newblock On suprema of autoconvolutions with an application to sidon sets.
\newblock {\em Proceedings of the American Mathematical Society},
  145(8):3191--3200, 2017.

\bibitem{Grandoni15}
Fabrizio Grandoni, Tomasz Kociumaka, and Michał Włodarczyk.
\newblock An lp-rounding 22-approximation for restricted maximum acyclic
  subgraph.
\newblock {\em Information Processing Letters}, 115(2):182--185, 2015.

\bibitem{GMR08}
Venkatesan Guruswami, Rajsekar Manokaran, and Prasad Raghavendra.
\newblock Beating the random ordering is hard: Inapproximability of maximum
  acyclic subgraph.
\newblock In {\em 2008 49th Annual IEEE Symposium on Foundations of Computer
  Science}, pages 573--582, 2008.

\bibitem{Dinur2005}
Samuel~Safra Irit~Dinur.
\newblock On the hardness of approximating minimum vertex cover.
\newblock {\em Annals of Mathematics}, 162(1):439--485, 2005.

\bibitem{Kenkre2015}
Sreyash Kenkre, Vinayaka Pandit, Manish Purohit, and Rishi Saket.
\newblock On the approximability of digraph ordering.
\newblock {\em Algorithmica}, 78(4):1182--1205, August 2017.

\bibitem{KKMS09}
Rohit Khandekar, Tracy Kimbrel, Konstantin Makarychev, and Maxim Sviridenko.
\newblock On hardness of pricing items for single-minded bidders.
\newblock In Irit Dinur, Klaus Jansen, Joseph Naor, and Jos{\'e} Rolim,
  editors, {\em Approximation, Randomization, and Combinatorial Optimization.
  Algorithms and Techniques}, pages 202--216, Berlin, Heidelberg, 2009.
  Springer Berlin Heidelberg.

\bibitem{KMS17}
Subhash Khot, Dor Minzer, and Muli Safra.
\newblock On independent sets, 2-to-2 games, and grassmann graphs.
\newblock In {\em Proceedings of the 49th Annual ACM SIGACT Symposium on Theory
  of Computing}, STOC 2017, page 576–589, New York, NY, USA, 2017.
  Association for Computing Machinery.

\bibitem{Klein2015}
Nathan Klein and Tom Wexler.
\newblock On the approximability of dag edge deletion.
\newblock 2015.

\bibitem{Lampis2008}
Michael Lampis, Georgia Kaouri, and Valia Mitsou.
\newblock On the algorithmic effectiveness of digraph decompositions and
  complexity measures.
\newblock In Seok-Hee Hong, Hiroshi Nagamochi, and Takuro Fukunaga, editors,
  {\em Algorithms and Computation}, volume 5369 of {\em Lecture Notes in
  Computer Science}, pages 220--231. Springer Berlin Heidelberg, 2008.

\bibitem{MV10}
Máté Matolcsi and Carlos Vinuesa.
\newblock Improved bounds on the supremum of autoconvolutions.
\newblock {\em Journal of Mathematical Analysis and Applications},
  372(2):439--447.e2, 2010.

\bibitem{Paik1994}
D.~Paik, S.~Reddy, and S.~Sahni.
\newblock Deleting vertices to bound path length.
\newblock {\em IEEE Transactions on Computers}, 43(9):1091--1096, Sep 1994.

\bibitem{Ragha08}
Prasad Raghavendra.
\newblock Optimal algorithms and inapproximability results for every {CSP}?
\newblock In {\em Proceedings of the Fortieth Annual ACM Symposium on Theory of
  Computing}, STOC '08, page 245–254, New York, NY, USA, 2008. Association
  for Computing Machinery.

\bibitem{Svensson2012}
Ola Svensson.
\newblock Hardness of vertex deletion and project scheduling.
\newblock In Anupam Gupta, Klaus Jansen, Jos{\'e} Rolim, and Rocco Servedio,
  editors, {\em Approximation, Randomization, and Combinatorial Optimization.
  Algorithms and Techniques}, volume 7408 of {\em Lecture Notes in Computer
  Science}, pages 301--312. Springer Berlin Heidelberg, 2012.

\bibitem{Tu2011}
Jianhua Tu and Wenli Zhou.
\newblock A factor 2 approximation algorithm for the vertex cover problem.
\newblock {\em Information Processing Letters}, 111(14):683 -- 686, 2011.

\bibitem{Williams2009}
Ryan Williams.
\newblock Finding paths of length $k$ in ${O}(2^k)$ time.
\newblock {\em Inf. Process. Lett.}, 109(6):315--318, February 2009.

\end{thebibliography}

\appendix
\section{Automated Verification}\label{sec:appendix}

We used Mathematica to verify the upper bound of $0.5482$ claimed in \cref{subsec:improved_distribution}. The code used is as follows:
\begin{verbatim}
P = ProbabilityDistribution[2/3 +(23/3)*(1 \[Minus] 2 x)^(22), {x, 0, 1}]
T = TransformedDistribution[y - z, {y \[Distributed] P, z \[Distributed] P}]
NMaximize[
  { CDF[T, t]/(t+1), -1 <= t <= 1 },
  t,                     
  WorkingPrecision -> 50,
  PrecisionGoal    -> 30,
  AccuracyGoal     -> 30
]
\end{verbatim}
Here we include a copy of a high precision run in Mathematica showing a worst case ratio of slightly below 0.5482. 
\begin{figure}[htb!]
    \centering
    \includegraphics[width=\linewidth]{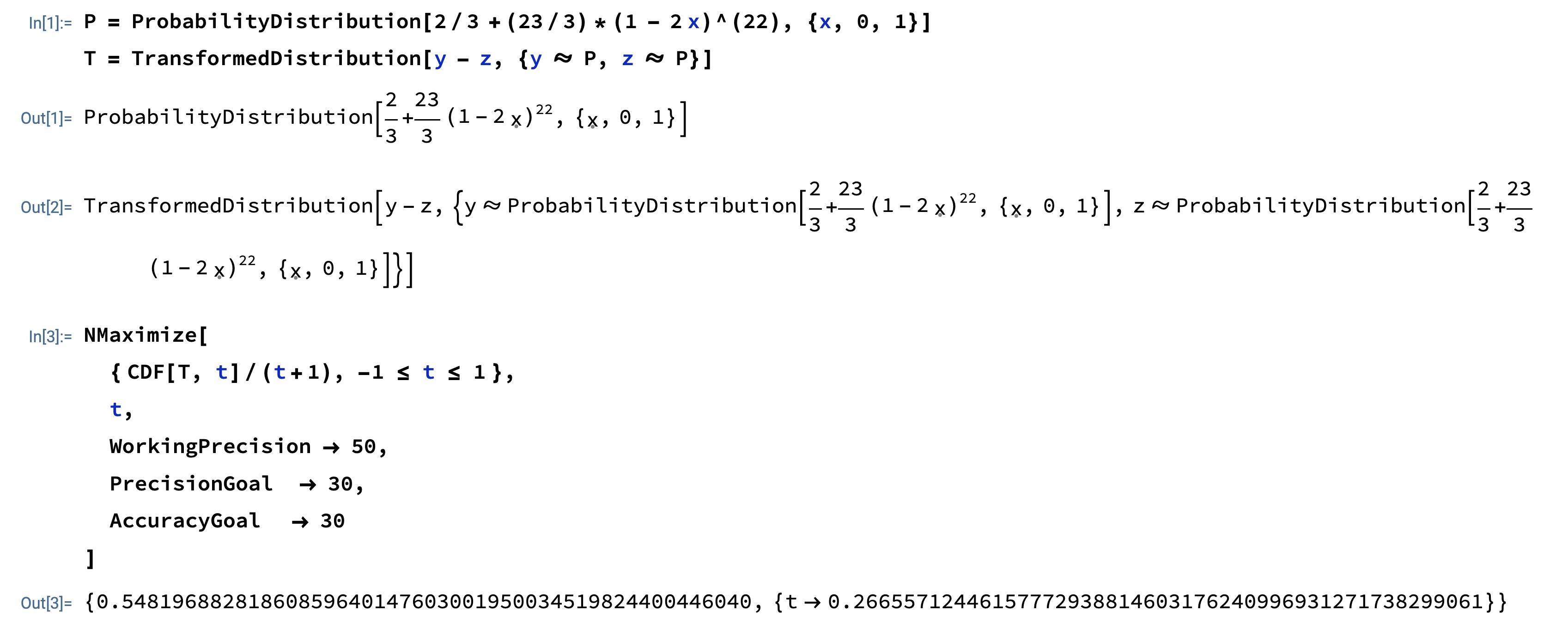}
    \caption{Execution of upper bound code.}
    \label{fig:upperbound}
\end{figure}
 \end{document}